\title{\LARGE \bf
Solving the Fermi paradox without assumptions*
}
\author{Berezin Alexander
\thanks{*This work was not supported by any organization.}
}
\newtheorem{definition}{Definition}[section]
\newtheorem{lemma}[definition]{Proposition}
\begin{document}
\maketitle
\thispagestyle{empty}
\pagestyle{empty}

\begin{abstract}
This paper suggests that a universal solution to the Fermi paradox exists and can be derived directly from the definition of life and/or intelligence, therefore eliminating the need for any questionable assumptions and even for the anthropic principle. The proposed solution\footnote{A more colloquial name for this solution emerged during its discussion: ``the \textbf{Black attractor}''.} puts an upper limit on growth of civilizations that is independent of resource availability or biological factors.
\end{abstract}

\section{Introduction}

It is commonly understood that infinite growth is physically impossible on a finite planet \cite{meadows_limits_2004}. However, at first glance, the known laws of physics suggest no hard limits to growth for civilizations that have the capability of interstellar travel, barring the speed of light and the total amount of accessible energy in the Universe. This is where the Fermi paradox appears.

Given even the most conservative estimations of growth rate, a civilization moderately older than humanity should have already left easily detectable traces, or even populated the entire galaxy \cite{Freitas_1980}. The lack of such observations is usually explained by suggesting that either the odds of a capable enough civilization to appear are extremely low \cite{ward_2000_rare}, or that they necessarily destroy themselves or each other \cite{Brin_1983}.

This paper argues that there is, indeed, a universal internal limit to growth of civilizations, even without any external limiting factors. Another way of putting this conclusion is to say that any technology that permits any civilization to leave a noticeable signature on a stellar scale inherently presents an existential risk in the sense that it can (and inevitably will) be used to permanently and drastically curtail the potential of said civilization \cite{Bostrom2002-BOSERA}.

\section{Model}
In order to discuss the Fermi paradox, at least one definition has to be agreed upon: usually it is either life, intelligence or civilization. However, making distinctions between these concepts may in itself be an unwarranted assumption. In principle, it is possible that some forms of life could evolve the capability of interplanetary, or even interstellar travel without evolving civilization, or even intelligence, first. The distinction between unintelligent and intelligent organisms does not have to be discrete. Even the concept of an organism does not have to retain its meaning for alien life.

In order to discourage the reader from making assumptions based on colloquial definitions of these terms, this paper will be referring to \textit{``agents''} as a generalization of all entities to which the Fermi paradox is applicable.

At least three attempts at such a generalization have been published. One is Erwin Schrodinger's ``What Is Life?'' \cite{schrodinger_what_1992}, the second is ``Equation of intelligence'' by Alexander Wissner-Gross \cite{wissner-gross_causal_2013}, and the third is ``Free energy principle'' by Karl Friston \cite{Friston20130475}.

\begin{definition}
\label{Life}
\cite{schrodinger_what_1992}
A system is considered \textit{alive} if it can decrease or maintain its internal entropy by increasing the entropy of its surroundings.
\end{definition}

\begin{definition}
\label{Intelligence}
\cite{wissner-gross_causal_2013}
A system is considered \textit{intelligent} if its actions aim at maximizing its future freedom of action.
\end{definition}

\begin{definition}
\label{Free energy}
\cite{Friston20130475}
A living system aims to minimize the dispersion of its sensed states, while using those sensations to infer external states of the world.
\end{definition}

\begin{lemma}
\label{equivalence}
The three definitions above are equivalent at scales relevant to the Fermi paradox.
\end{lemma}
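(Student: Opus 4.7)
The plan is to exhibit a common information-theoretic substrate that subsumes all three definitions and then to establish pairwise containments of the agent classes they single out. I would frame each definition as a statement about the joint entropy of an agent together with its environment over some time horizon $T$: Definition \ref{Life} constrains the sign of $\frac{d}{dt} H_{\text{int}}$ while forcing a compensating increase in $H_{\text{ext}}$; Definition \ref{Intelligence} maximizes an integral of reachable-state entropy over future times up to $T$; and Definition \ref{Free energy} minimizes the variational free energy, which upper-bounds the surprisal $-\log p(\text{sensations})$ and therefore the entropy of the system's sensed states at the non-equilibrium steady state.

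Having fixed this common language, I would argue three implications in a cycle. First, Definition \ref{Life} implies Definition \ref{Free energy} at long time scales: any system that maintains low internal entropy over civilizational timescales must occupy a small, recurrent subset of its phase space, which is exactly the characterization of a system that minimizes the dispersion of its sensed states; a Markov-blanket argument converts the thermodynamic statement into the free-energy statement. Second, Definition \ref{Free energy} implies Definition \ref{Intelligence}: a system whose generative model of the world keeps surprisal bounded is, by construction, able to anticipate the outcomes of its own actions, and on long horizons the policy that minimizes expected free energy coincides with the policy that maximizes expected future path entropy, i.e., the causal entropic force. Third, Definition \ref{Intelligence} implies Definition \ref{Life}: maximizing future freedom of action requires preserving the agent's computational and physical integrity against thermal noise, which, by Landauer's principle, is only achievable by actively exporting entropy to the environment.

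The step I expect to be the main obstacle is giving precise meaning to the qualifier ``at scales relevant to the Fermi paradox.'' The three definitions are demonstrably \emph{not} equivalent at microscopic scales --- a B\'enard cell satisfies Definition \ref{Life} but not Definition \ref{Intelligence}, and a narrow optimizer satisfies Definition \ref{Intelligence} in a restricted sense without obviously satisfying Definition \ref{Life}. The equivalence only emerges once we require the agent to (a) persist over astronomical timescales, (b) act over interstellar distances, and (c) leave a macroscopically detectable signature. I would encode these three requirements as lower bounds on $T$, on spatial extent, and on the total free-energy budget, and then show that above these bounds each of the three definitions forces the agent to perform non-trivial computation on a model of its environment, which in turn forces it to satisfy the other two. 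Making this collapse of agent classes rigorous, rather than merely plausible, is the delicate part, and it is what the remainder of the proof would have to carry out.
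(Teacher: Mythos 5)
Your proposal is sound at the same level of rigor as the paper's own argument and follows the same overall strategy --- closing a directed cycle of three implications among the three definitions --- but it traverses the cycle in the opposite orientation and supports each arrow with different physics. The paper argues \ref{Life} $\Rightarrow$ \ref{Intelligence} (entropy export forces energy-seeking, which over long times amounts to maximizing freedom of action), \ref{Intelligence} $\Rightarrow$ \ref{Free energy} (integrating over long-term paths demands near-zero dispersion of sensed states), and \ref{Free energy} $\Rightarrow$ \ref{Life} (Friston's ergodicity remark: the long-term average of surprise is entropy). You argue essentially the three converse arrows --- \ref{Life} $\Rightarrow$ \ref{Free energy} via recurrence in phase space and a Markov-blanket translation, \ref{Free energy} $\Rightarrow$ \ref{Intelligence} via the coincidence of expected-free-energy minimization with causal path-entropy maximization on long horizons, and \ref{Intelligence} $\Rightarrow$ \ref{Life} via Landauer's principle --- which is equally sufficient to establish equivalence, since a directed three-cycle yields all the biconditionals by transitivity. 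What your route buys is a genuinely more careful treatment of the qualifier ``at scales relevant to the Fermi paradox'': your B\'enard-cell and narrow-optimizer counterexamples make explicit that the equivalence fails at small scales, a point the paper acknowledges only implicitly and only in the ergodicity step; encoding the scale restriction as lower bounds on horizon, extent and free-energy budget is an improvement over the paper's single sentence. What it costs is that the final ``collapse of agent classes'' above those thresholds is deferred rather than carried out --- but the paper's own three short paragraphs are no more complete on this point, so by the standard the paper itself sets, your argument is at least as finished as the original.
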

\begin{proof}
First, let us consider a system that aims to maximize its future freedom of action. As described in \cite{wissner-gross_causal_2013}, when presented with a set of possible action paths, such system will choose the path that maximizes an average of short-term directions weighted by the diversity of long-term paths that they make reachable. However, precise integration over long-term paths requires full information about all factors that can possibly affect the system, i.e. zero dispersion of sensed states. As the dispersion increases, the precision of path integration can only decrease. Therefore, \ref{Free energy} follows from \ref{Intelligence}.

Second, maintaining or decreasing the internal entropy of a system requires continuous increase in its external entropy. Hence, the system either ceases being alive once its energy reservoir is exhausted or seeks more energy in the external environment. And since any action increases entropy, seeking more energy over long enough time is equivalent to maximizing future freedom of action. Therefore, \ref{Intelligence} follows from \ref{Life}.

Finally, as noted by Friston himself in \cite{Friston20130475}, ``Under ergodic assumptions, the long-term average of surprise (a.k.a. ``free energy'') is entropy'', where ``ergodicity means that the time average of any measurable function of the system converges (almost surely) over a sufficient amount of time''. Considering the timescales relevant to the Fermi paradox, we may conclude that \ref{Life} follows from \ref{Free energy}.
\end{proof}

Let \textit{agent} be an entity that conforms to any one of the equivalent definitions; and let $\mathbb{A}$ be the set of all possible agents. A \textit{civilization} is then defined as an arbitrary nonempty subset of $\mathbb{A}$.

\section{Solution}
\label{solution}

\begin{lemma}
\label{greed}
Maximizing future freedom of action is equivalent to hoarding the greatest amount of resources.
\end{lemma}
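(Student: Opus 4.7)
The plan is to prove the equivalence by chaining the three characterizations established in Proposition~\ref{equivalence} and showing that each direction reduces to a monotonicity statement about the resource pool under the agent's exclusive control.

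First I would unpack Definition~\ref{Intelligence}: the Wissner-Gross functional assigns to each current move a value proportional to the entropy of the distribution over future reachable action paths. Every such path, however, is a sequence of physical actions, and by Definition~\ref{Life} each action requires dissipating free energy drawn from the agent's accessible reservoir. Hence the measure of the reachable-path space is bounded above by the cumulative free-energy budget the agent controls over the relevant horizon, and this bound is generically tight. A short monotonicity argument then gives: enlarging the pool of resources exclusively under the agent's control cannot decrease, and generically strictly increases, the entropy that the agent is trying to maximize.

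For the forward direction (freedom maximization $\Rightarrow$ hoarding), I would argue by contraposition. Suppose the agent leaves some reachable resource outside its exclusive control, either unclaimed or in the possession of another agent. By Proposition~\ref{equivalence} any other agent is itself a freedom-maximizer, so that resource will either be consumed against the first agent's interest or tied up in a configuration not freely accessible to it; in either case the reachable-path entropy is strictly smaller than the supremum attainable had the resource been seized. This contradicts the hypothesis, so a freedom-maximizer must hoard. For the reverse direction (hoarding $\Rightarrow$ freedom maximization) the monotonicity step already does the work, with the caveat that hoarding must preserve optionality rather than lock resources into irreversible sinks; I would invoke Definition~\ref{Free energy} to ensure that the agent, minimizing dispersion of sensed states, prefers fungible, low-entropy storage forms, so that the supremum of reachable-path entropy is in fact attained.

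The main obstacle I anticipate is making ``exclusive control'' precise in a setting where multiple agents of comparable capability occupy overlapping light-cones: one could imagine cooperative regimes in which shared access outperforms unilateral hoarding. I would dispose of this by restricting attention to the scale explicitly named in Proposition~\ref{equivalence}, the interstellar regime relevant to the Fermi paradox, where signalling delays make cooperative equilibria strictly dominated for a causal-entropic maximizer; hoarding then emerges as the unique strategy that saturates the upper bound on future freedom of action, yielding the claimed equivalence.
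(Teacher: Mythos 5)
Your proposal is correct in substance but takes a noticeably heavier route than the paper. The paper's entire proof is a four-sentence physical chain: any action decomposes into movements, movements to higher energy states expend energy, any energy-generating process consumes material fuel, hence the space of available actions is bounded by the fuel stock and maximizing fuel maximizes that space. That is exactly your central ``monotonicity step'' — the reachable-path measure is bounded by the controlled resource budget — and on that core the two arguments coincide. Where you diverge is in everything you wrap around it: the paper does not chain through all three definitions of Proposition~\ref{equivalence}, does not argue the two directions of the equivalence separately, and in particular does not use your contrapositive multi-agent argument (that an unclaimed resource will be seized or consumed by a rival freedom-maximizer). That adversarial reasoning is not wrong, but the paper deliberately defers it to Proposition~\ref{storage}, keeping Proposition~\ref{greed} valid even for a lone agent in an otherwise empty universe; your version makes the lemma depend on the presence of competing agents, which narrows its scope slightly. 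On the other hand, your caveat that hoarding must preserve optionality rather than lock value into irreversible sinks, and your explicit treatment of the converse direction, address a real looseness in the paper's one-directional argument (the paper asserts an equivalence but only shows that fuel upper-bounds the action space). Net: same key idea, with your version buying more care on the biconditional at the cost of importing multi-agent assumptions the paper reserves for later.
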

\begin{proof}
Any action can be described as a combination of movements. Movements from lower to higher energy states\footnote{The ``energy state'' here is used in the general sense and does not imply the states are discrete.} expend energy. Any theoretically possible process of energy generation expends some material resource as fuel. Therefore, any action expends some fuel material, and maximizing the amount of fuel also maximizes the space of available actions.
\end{proof}

Obviously, materials have more uses than just fuel; they are required to sustain any type of growth. However, if the most efficient reactor possible converts any matter into energy with no waste products and any fuel can be converted into some construction material, then the terms \textit{``fuel''} and \textit{``matter''} can be used interchangeably. Consequently, as the efficiency of technology approaches its theoretical maximum, correlation between the \textit{value} of resources and their physical mass becomes more linear, meaning that value is also interchangeable with mass.

At this point, a few more definitions should be added for readability purposes.
\begin{definition}
An agent's \textit{weight} is the cumulative value of resources that agent controls.
\end{definition}
\begin{definition}
An \textit{$\alpha$-agent} is the one that has the greatest weight in a civilization.
\end{definition}

\begin{lemma}
\label{pareto}
Given long enough time, all available value will be controlled by the $\alpha$-agent.
\end{lemma}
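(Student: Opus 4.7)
The plan is to establish a \textbf{rich-get-richer} dynamic and then argue convergence of the $\alpha$-agent's share of total value to unity. By Proposition \ref{greed}, every agent behaves as if it is trying to monotonically increase its weight; combined with the mass/value identification immediately following that proposition, this casts the civilization as a population of agents competing over a bounded, conserved pool of value. Any change in one agent's weight is therefore a reallocation from another, so the total is constant and the dynamics live on a simplex.

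First, I would formalize the competitive advantage of the $\alpha$-agent. Since weight quantifies, via Definition \ref{Intelligence} and Proposition \ref{greed}, the space of actions accessible to an agent, the heavier of any two competitors has (weakly) a superset of the lighter competitor's strategies. In any contest over a disputed resource the heavier agent therefore prevails with probability at least $1/2$, and strictly greater whenever its weight strictly exceeds that of its rival. Summing over the sequence of pairwise interactions that make up the dynamics, the $\alpha$-agent's share of total value becomes a bounded submartingale: it cannot decrease in expectation, and it strictly increases whenever a weight gap exists.

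Second, I would invoke convergence. A submartingale bounded above by $1$ converges almost surely; the only fixed points of the dynamics are configurations in which the $\alpha$-agent owns everything, or in which two or more agents are tied at the top. A tie is an unstable equilibrium, since any arbitrarily small perturbation breaks the symmetry and produces a new strict $\alpha$-agent to which the previous argument reapplies. Over timescales relevant to the Fermi paradox (invoked already in Proposition \ref{equivalence}), such perturbations occur with probability one, so the limit must be the all-or-nothing configuration.

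The step I expect to be the main obstacle is ruling out \emph{stable coalitions}: one might worry that several sub-$\alpha$-agents could pool their weight indefinitely to counterbalance the $\alpha$-agent. To dispose of this, I would appeal again to Definition \ref{Intelligence} via Proposition \ref{greed}: each participant of any coalition is itself maximizing its own future freedom of action, not the coalition's, so any alliance is necessarily transient. Once the common rival is diminished, the former allies must compete, generating a new $\alpha$-agent whose eventual dominance follows by the same mechanism. A straightforward induction on the number of surviving agents then closes the argument.
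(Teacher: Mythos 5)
Your route is genuinely different from the paper's. The paper models the dynamics as deterministic compound growth over an \emph{unbounded} pool: every agent reinvests its entire weight into acquiring new value from space, the return is taken to be multiplicative in current weight, so after $t$ rounds agent $n$ holds weight $\propto (s-n\epsilon)^t$ and the $\alpha$-agent's share $s^t/\sum_{n}(s-n\epsilon)^t \to 1$ simply because every other agent's geometric growth rate is strictly smaller. There is no conflict, theft or redistribution anywhere in that argument --- the paper explicitly stresses that the conclusion holds ``even assuming completely fair competition.'' You instead posit a \emph{conserved}, bounded pool redistributed through pairwise contests that the heavier party wins with probability above one half, and you argue via bounded-submartingale convergence. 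That framing buys you robustness to randomness and a way to address coalitions (which the paper silently ignores), but it inverts the paper's premise: the entire setting is expansion into space, so the total is growing and the simplex picture does not apply as stated.

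There are also two concrete holes in your version. First, ``wins with probability greater than $1/2$'' does not make the $\alpha$-agent's share a submartingale unless the stakes are symmetric: an agent that wins small pots often and loses large pots rarely can still lose share in expectation. Second, and more seriously, a bounded submartingale converges almost surely to \emph{some} random limit, not necessarily to $1$; convergence alone tells you nothing about where. To force the limit to be the all-or-nothing configuration you need a quantitative drift condition, e.g.\ that whenever the $\alpha$-agent's share is bounded away from $1$ the expected one-step gain in share is bounded below by a positive constant, which in turn requires contest stakes bounded away from zero. With those two added hypotheses your argument closes, and is arguably more honest than the paper's, since it does not assume that returns are exactly multiplicative in weight; without them, the step from ``converges'' to ``converges to $1$'' is unsupported.
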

\begin{proof}
Consider $N$ agents each starting with $weight(n) = s - n*\epsilon$, where $s > 0$ is an arbitrary constant and $\epsilon$ is a minor perturbation: $0 < \epsilon << s$. Assume they invest all weight into obtaining more value from space and their technological solutions are equally efficient.

Reaching resources in space requires to change one's energy state. Since energy is equivalent to value, the $\alpha$-agent (with corresponding $n=0$) will be able to perform the greatest change of energy state, and therefore gain access to the greatest amount of new value. The $\alpha$-agent now has a weight proportional to $s^2$, while other agents have $weight(n) \propto (s-n\epsilon)^2$. Repeating the same process will result in agent $n$ on step $t$ having $weight(n) \propto (s-n\epsilon)^t$.

$$\lim\limits_{t \to \infty} \frac{s^t}{\sum_{n=0}^{N} (s-n\epsilon)^t} = 1$$

As $t \to \infty$, the fraction of value held by the $\alpha$-agent approaches 1, even assuming completely fair competition. This means that all the available value will at some point belong to the $\alpha$-agent.
\end{proof}

\begin{lemma}
\label{storage}
If the number of agents in a civilization is greater than one, they will be incentivized to centralize storage of their value.
\end{lemma}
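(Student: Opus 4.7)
The plan is to show that decentralized storage of value is strictly less efficient than centralized storage, so that any agent maximizing its future freedom of action (per Proposition \ref{greed}) will prefer to pool its weight with others in a shared location. The core physical observation is that, since value is interchangeable with mass, storing value requires some physical container whose upkeep cost scales with its boundary rather than its volume, and that any movement of value between containers itself consumes value.

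First, I would formalize the cost of preserving a store against the unavoidable entropic drift of its surroundings. For a single store of mass $M$, the protective overhead scales with its surface, i.e. as $M^{2/3}$. For $N$ separate stores each of mass $M/N$, the total overhead is $N \cdot (M/N)^{2/3} = N^{1/3} M^{2/3}$, strictly larger by a factor of $N^{1/3}$. Thus centralization reduces the rate at which each agent must expend value merely to retain it, freeing more weight for actual action.

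Second, I would argue that decentralization imposes a second, independent cost: any exchange, combination, or joint action between agents requires transporting value across the gaps between stores, and by the same identification of value with mass, this transport itself consumes value proportional to distance and to the amount moved. Centralization collapses this cost to zero. Since Proposition \ref{pareto} already guarantees that value flows toward the $\alpha$-agent over time, centralizing preemptively spares the civilization the cumulative transport losses of what would otherwise be an unbounded sequence of transfers.

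The hardest step will be ruling out the obvious counter-incentive: a shared store looks like a single point of seizure by the $\alpha$-agent or by an outside party, which naively reduces each non-$\alpha$ agent's freedom of action. I would address this by distinguishing \emph{physical colocation} from \emph{ownership}: agents can centralize the location of their value while retaining nominal claims on it, and any protective redundancy that separation would have provided can be reproduced more cheaply by thickening the shared container, since the same surface-to-volume inequality applies. Under the minimal cooperative assumption already implicit in grouping agents into one civilization, this leaves centralized storage Pareto-superior, establishing the claimed incentive.
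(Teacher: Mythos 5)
Your proposal reaches the same geometric core as the paper --- the surface-to-volume inequality, i.e.\ that the cost attached to a store of mass $M$ scales as $M^{2/3}$ while the value held scales as $M$, so that $N$ separate stores incur an overhead larger by a factor of $N^{1/3}$ --- but you arrive at it by a genuinely different route. The paper's proof is adversarial: it starts from the observation that Definition \ref{Intelligence} does not forbid theft, so stealing is rational whenever the gain exceeds the cost of the conflict (by Proposition \ref{greed}); each agent therefore spends value on \emph{defence}, and it is the defence cost that is proportional to surface area while storable value is proportional to volume, whence a single sphere is optimal. You instead derive the surface scaling from entropic upkeep and add an independent transport-cost argument, and you treat the theft/seizure issue as an \emph{objection} to be neutralized rather than as the engine of the proof. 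Your version is somewhat more general (it does not depend on agents being mutually hostile) and your explicit $N^{1/3}$ computation is sharper than anything in the paper; on the other hand, the paper's adversarial framing does real work elsewhere --- it is what makes centralization an inescapable competitive equilibrium rather than a mere efficiency preference (see the paper's later remark that an agent suspending growth for safety will be exploited), and your "minimal cooperative assumption" for pooling across agents is an added hypothesis the paper does not need, since its proof only requires each agent to concentrate its own holdings, with Proposition \ref{pareto} then doing the work of collapsing the civilization's value into one agent's sphere. Both arguments are equally (in)formal by the paper's standards, so I would count yours as a correct alternative derivation rather than a gap.
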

\begin{proof}
Since the definition of agents \ref{Intelligence} does not forbid them to steal from each other, such behaviour is preferable when the value obtained surpasses the value lost in the conflict (according to \ref{greed}). Knowing this, agents will attempt to maximize the value that any other agent will lose by attacking them while minimizing their own expenditures on that task. Considering that the value spent on defence of a region of space is proportional to its surface area, while the value that can be stored inside that region is proportional to its volume, it now follows from basic geometry that the optimal way to store value is within a single spherical region.
\end{proof}

Whether a civilization consisting of a single agent can evolve to a scale significant to the Fermi paradox is unclear. This paper does not cover such a possibility.

\begin{lemma}
\label{black}
For every civilization of more than one agent there exists an upper bound to the amount of resources it can accumulate. Attempted growth past that bound will result in a collapse into a black hole.
\end{lemma}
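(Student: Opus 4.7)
The plan is to combine Proposition \ref{storage} with general relativity. By Proposition \ref{storage}, the civilization's accumulated value is packed into a single spherical region; by the earlier identification of value with mass-energy, the total gravitational mass $M$ inside that sphere equals the accumulated weight (up to a constant of proportionality absorbed into units). So the problem reduces to asking: what happens to a spherical concentration of mass $M$ as $M$ grows without bound?

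First I would recall the Schwarzschild criterion: a spherically symmetric distribution of total mass $M$ confined within radius $r$ is hidden behind an event horizon whenever $r \leq r_s := 2GM/c^{2}$. For storage at any finite maximum density $\rho$, the physical radius satisfies $r \geq (3M/(4\pi\rho))^{1/3}$. Combining these, collapse becomes unavoidable once
\[
M \geq M_{\max}(\rho) = \sqrt{\frac{3\, c^{6}}{32\, \pi\, \rho\, G^{3}}},
\]
because the Schwarzschild radius (linear in $M$) eventually overtakes the storage radius (growing only as $M^{1/3}$). This gives the desired upper bound and identifies the failure mode as the one claimed.

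The hard part will be ruling out the escape route of lowering $\rho$ indefinitely: an agent could try to smear its weight over a huge low-density shell, pushing $M_{\max}$ arbitrarily high. I would close this gap by appealing back to Proposition \ref{storage}. For fixed $M$, the storage radius scales as $\rho^{-1/3}$ and the surface one must defend scales as $\rho^{-2/3}$, so arbitrarily dilute configurations are strictly dominated by the densest admissible one in the competitive regime of Proposition \ref{pareto}. Moreover, the fuel/matter interchangeability argument preceding the definition of weight implicitly requires localized, positive rest-mass storage, which carries an irreducible density floor set by nuclear or degenerate-matter physics. Together these force $\rho$ to stay above a strictly positive constant, so $M_{\max}$ is finite.

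Finally I would observe that the collapse is not an edge case the $\alpha$-agent can stop at: by Proposition \ref{greed} any residual freedom of action is spent on further accumulation, so the trajectory of the weight function is monotonically approaching $M_{\max}$ from below and must cross it. Once crossed, the Schwarzschild solution is the unique stable endpoint, giving exactly the ``Black attractor'' claimed in the statement.
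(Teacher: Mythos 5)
Your proposal is correct and follows essentially the same route as the paper: it combines Proposition \ref{pareto} and Proposition \ref{storage} with the Schwarzschild criterion, which is exactly the step the paper's own proof dismisses as ``obvious.'' Your version is in fact more careful than the original, since you make the bound $M_{\max}(\rho)$ explicit and close the low-density loophole that the paper never addresses.
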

\begin{proof}
According to \ref{pareto}, a disproportionate amount of resources is controlled by a single agent. According to \ref{storage}, all these resources are packed very closely. From here the problem is obvious. A huge mass packed densely enough produces a black hole.
\end{proof}

\section{Testability}
Usually testing a Fermi paradox solution proposal would require observations of alien life (or lack thereof), which makes the proposals essentially untestable at present.

However, if \ref{black} is true and extraterrestrial civilizations had existed at some point in our galaxy, we should be able to observe black holes left by their collapse. Problem is, attributes of those black holes depend on other unknown quantities, so telling whether a specific body is a remnant of a past civilization or a natural formation depends entirely on context.

Basically, observing multiple black holes that do not fit accepted astronomical models of star/galaxy formation might be considered as weak evidence for this hypothesis, and vice versa. Although observing black holes without prior knowledge of their location is extremely difficult, it is suggested in \cite{abbott_observation_2016} that gravitational astronomy may change that in near future.

\section{Discussion}
\label{Discussion}

\subsection{How does this explain the absence of von Neumann probes?}

\begin{definition}
\label{vonNeumann}
A von Neumann probe (VNP) is an entity that aims to place at least one copy of itself at the greatest number of stars in the shortest time period.
\end{definition}
\begin{lemma}
VNPs are agents.
\end{lemma}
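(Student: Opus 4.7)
The plan is to show that a VNP satisfies at least one of the three definitions from Proposition \ref{equivalence}, since by that proposition the other two then follow automatically. The cleanest target is Definition \ref{Intelligence}, because the VNP's stated objective is already phrased in terms of an aim (placing copies), which is the same grammatical form as "aims at maximizing its future freedom of action". So the work reduces to identifying the replication-at-stars objective with the freedom-of-action objective.

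First I would unpack Definition \ref{vonNeumann}: a VNP has two quantities it is optimizing, the number of stars reached and the time taken. Each successfully placed copy acquires access to the material and energetic resources of a new stellar system. By Proposition \ref{greed}, control of such resources is equivalent to expanding the space of actions available to the entity; and "in the shortest time period" ensures that this expansion happens before competing processes (resource depletion, interference from other agents, thermodynamic equilibration of the target system) can foreclose options. Thus both factors in the VNP objective monotonically increase the future freedom of action of the overall VNP population, and conversely any deviation from the VNP objective would leave freedom of action on the table. I would then invoke Proposition \ref{equivalence} to conclude that the VNP also satisfies Definitions \ref{Life} and \ref{Free energy}, so it is an agent in the sense defined after Proposition \ref{equivalence}.

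The main obstacle I expect is the "single entity versus swarm" issue: Definition \ref{vonNeumann} is written in the singular, but the thing whose freedom of action actually grows is the aggregate of all copies, not any individual probe. I would address this by noting that the civilization in the paper is already defined as an arbitrary nonempty subset of $\mathbb{A}$, so it suffices to treat either (a) each individual probe as an agent whose local aim (replicate once more, as fast as possible) is a special case of maximizing its own reachable future states, or (b) the whole lineage as a single agent in the set-theoretic sense. Either reading puts the VNP inside $\mathbb{A}$, which is what the Proposition asserts. A secondary subtlety worth flagging, but not belabouring, is that "aim" in Definition \ref{Intelligence} is behavioural rather than intentional, so the question of whether a VNP is conscious or goal-directed in a stronger sense is immaterial here.
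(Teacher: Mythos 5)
Your proposal is correct and takes essentially the same route as the paper: both reduce the VNP's objective to resource maximization and invoke Proposition \ref{greed} to conclude it satisfies Definition \ref{Intelligence}. The only difference is which clause of Definition \ref{vonNeumann} does the work --- the paper argues that minimizing flight time forces each probe to hoard resources before departure, while you argue that maximizing the number of stars reached grants access to more resources --- and your additional handling of the individual-versus-swarm ambiguity is a reasonable embellishment the paper omits.
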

\begin{proof}
It is fairly obvious that reducing the time of interstellar flights requires increasing the amount of resources spent by probes. Considering that resource extraction and replication are relatively fast processes compared to interstellar flights, VNPs are incentivized to gather as much resources as possible before departing for the next star. According to \ref{greed}, this proves their agency.
\end{proof}

Therefore, VNPs face the same problem as civilizations, if they are not the same thing to begin with. Whether VNPs are manned or autonomous is irrelevant as long as they inherit the \ref{Intelligence} property.

\subsection{How does one even move such amounts of mass for such distances in order to create a black hole?}

We might not know yet what technologies are going to be used for such large-scale projects; what we can say definitively, though, is that the laws of physics as we understand them are not the limitation.

\subsection{But if the collapse is so easy to predict, and there is a sentient entity in control, that entity will not allow it.}

The key point is that no agent is actually in control. Rather, the established system is in control of the agents. A more intuitive example of such a situation is the Prisoners' dilemma \cite{poundstone_prisoners_1993}. The same description can be applied to governments accumulating advanced weapons that can never be used, bankers' behaviour preceding a financial crash and the global response to climate change. Generally speaking, there is no basis to claim that systems consisting of rational agents also act rationally.

\subsection{But if the collapse is so easy to predict, surely most individuals should be able to escape it?}

Problem is, escaping the collapse requires escaping the gravity of the collapsing structure, which would require exponentially more energy as its mass increases; at the same time, the total energy stored within it only grows linearly, and the vast majority of individuals inhabiting it will not possess any meaningful fraction of it according to \ref{pareto}.

On the other hand, some individuals might me able to escape by sheer chance, and some will not be caught in the collapsing region at all. However, all wealth the civilization had accumulated is irreversibly destroyed. Survivors have to start from scratch. And when they finally manage to recreate what was lost, history will simply repeat itself.

\subsection{In a civilization consisting of billions or trillions of individual agents, how can one of them exercise so much power without the help of others? If the agents on whom that one relies organize and stage a revolt, they should be able to divide the resources and avoid a collapse, shouldn't they?}
\label{revolution}

\begin{lemma}
The probability of successful revolt against the $\alpha$-agent decreases as the size of civilization (in terms of cumulative value) increases.
\end{lemma}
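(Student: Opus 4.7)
The plan is to reduce the question to a ratio between the $\alpha$-agent's weight and the cumulative weight of any potential revolting coalition, then show that this ratio grows without bound as the civilization's cumulative value grows. First, I would define a \textit{successful revolt} as the existence of a coalition $C \subseteq \mathbb{A} \setminus \{\alpha\}$ whose combined weight exceeds some monotonically increasing function of $w_\alpha$, the $\alpha$-agent's weight. The exact form of this threshold function is not critical, because the argument will rely on asymptotics; any bound that is non-decreasing in $w_\alpha$ suffices.

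Next I would invoke Proposition \ref{pareto} directly: as the iterations of resource acquisition proceed, the fraction of the civilization's total value held by the $\alpha$-agent tends to $1$, and symmetrically the fraction available to every other agent combined tends to $0$. Since by assumption the civilization's cumulative value grows with the same step parameter $t$ that drives this concentration, I would parameterize both quantities by $t$ and then take a limit: the ratio $w_\alpha / \sum_{n \neq 0} w_n$ diverges. Consequently, even the \emph{best-case} coalition, namely the union of every non-$\alpha$ agent, eventually falls short of the threshold for all sufficiently large $t$, so the \emph{physical} success of a revolt becomes impossible in the limit.

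The second ingredient is coordination. Even granting enough cumulative weight in principle, organizing a coalition incurs costs: communication, commitment, and the defection incentive each participant faces by Proposition \ref{greed}, since any successful conspirator would prefer to replace the $\alpha$-agent rather than share the spoils. I would model the probability that a specified coalition actually forms as non-increasing in $|C|$, because each additional participant introduces an independent defection risk; combined with the first ingredient, the probability of a revolt that is both \emph{sufficient in weight} and \emph{feasible to assemble} then shrinks at least geometrically in $t$, and hence in the civilization's cumulative value.

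The main obstacle will be making the probability measure over revolts precise without smuggling in assumptions that the paper has deliberately avoided. Because the paper has cultivated an assumption-free style, I would state the conclusion as a conditional asymptotic rather than a concrete bound: for every fixed threshold $p > 0$ there exists a civilization size beyond which no revolt succeeds with probability at least $p$. This form side-steps the need to pin down specific distributions over agent behaviour or combat outcomes while still capturing the monotone decay the lemma asserts, and it is entailed by the combination of Propositions \ref{greed} and \ref{pareto} alone.
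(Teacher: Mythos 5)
Your core argument is the same as the paper's: reduce to the best-case coalition of every non-$\alpha$ agent, invoke Proposition \ref{pareto} to show the $\alpha$-agent's share of total value tends to $1$ and hence eventually exceeds that of all other agents combined, and conclude that a revolt decided purely by weight becomes impossible. The coordination/defection ingredient you add in your third paragraph is extra machinery the paper does not use or need for this lemma; the weight asymptotics alone carry the conclusion.
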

\begin{proof}
An ideal revolt is a conflict between the $\alpha$-agent and the sum of all other agents in a given civilization. Assuming the defensive and offensive systems do not need their own agency to function, the outcome of a conflict is only determined by the value that each side is able to spend on it. According to \ref{pareto}, the fraction of value held by the $\alpha$-agent increases as civilization grows. Eventually its weight will surpass that of all other agents put together, at which point a successful revolt is impossible.
\end{proof}

It is, indeed, possible until then, but, once the growth resumes, the same dynamic from \ref{pareto} also continues, eventually returning the value distribution to the same shape.

\subsection{If agents are meant to maximize their future freedom of action (\ref{Intelligence}), why wouldn't they factor in the probability of a collapse?}

Even if they do, this should not affect their behavior. If they choose to suspend growth for safety, another agent will exploit that, as explained in \ref{storage}. The risk will grow with time no matter what they do. And, given the time scale that an interstellar civilization needs for development, that risk accumulates to a near certainty of collapse.

\subsection{What if the civilization is forced to constantly spend the resources they acquire?}
\label{spending}

That would protect the civilization itself from a collapse. However, the only feasible condition that ensures such spending is a war with another civilization. Whether this is a better alternative to collapse is up to the reader to decide.

\subsection{Can the black holes resulting from collapsed civilizations explain dark matter?}

No. The idea of dark matter consisting of black holes is currently not considered viable \cite{zumalacarregui_limits_2018}.

\section{Implications}

\subsection{Assuming this solution is correct, what predictions can it make concerning the future of humanity?}

In the near future, private space industry should grow dramatically. As already predicted in \cite{lewis_1996_mining}, just the mining of near-Earth asteroids can bring unprecedented profit in minerals that are rare in Earth's crust, but abundant in smaller celestial bodies. That capital can then be invested into even more ambitious projects requiring permanent settlements in outer space, such as building solar reflectors in Earth's L1 point to decrease temperature and mitigate climate change; setting up high-powered lasers to propel interstellar spacecraft \cite{lubin_2016_roadmap} or even curing aging as an indirect result of modifying humans for long-term exposure to microgravity and radiation \cite{biolo_2003_microgravity}. Slowly but inevitably, engineering challenges that seem insurmountable today will become real targets; challenges such as gathering enough resources to collapse into a black hole.

\subsection{How can a civilization prevent its collapse?}

Two ways have been suggested in \ref{revolution} and \ref{spending}. The third solution might come from the fact that it is theoretically possible for few civilizations to overcome their economic incentives. Even if they themselves survive, other uncontrollably expanding civilizations would still endanger their existence. That should incentivize forcibly constraining younger civilizations, implementing ``the Zoo hypothesis'' \cite{BALL1973347} in practice. Such scenario does not permit unconstrained expansion either, thus remaining an equally viable solution to the Fermi paradox.

\bibliographystyle{unsrt}
\bibliography{black_attractor.bib}
\end{document}